\newtheorem{theorem}{Theorem}
\newtheorem{lemma}[theorem]{Lemma}
\def\R{\mathcal{R}}
\def\RR{\mathbbm{R}}
\def\EE{\mathbbm{E}}
\DeclareMathOperator{\tr}{tr}
\begin{document}

\title{Note on sampling without replacing from a finite collection
of matrices}

\author{David Gross and Vincent Nesme}

\affiliation{
	Institute for Theoretical Physics, Leibniz University Hannover,
	30167 Hannover, Germany
}

\email{www.itp.uni-hannover.de/~davidg}


\begin{abstract} 
	This technical note supplies an affirmative answer to a question
	raised in a recent pre-print in the
	context of a ``matrix recovery'' problem. Assume one samples $m$
	Hermitian matrices $X_1, \dots, X_m$ with replacement from a finite
	collection. The deviation of the sum $X_1+\dots+X_m$ from its
	expected value in terms of the operator norm can be estimated by an
	``operator Chernoff-bound'' due to Ahlswede and Winter. The question
	arose whether the bounds obtained this way continue to hold if the
	matrices are sampled without replacement. We remark that a positive
	answer is implied by a classical argument by Hoeffding. Some
	consequences for the matrix recovery problem are sketched.
\end{abstract}


\maketitle

This is a technical comment on \cite{gross_recovering_2009}. While we
provide a (minimal) introduction, readers not familiar with
\cite{gross_recovering_2009} may find the present note hard to follow.

\subsection{Motivation}

The \emph{low-rank matrix recovery problem} 
\cite{recht_guaranteed_2007,
candes_exact_2009,candes_power_2009,
singer_uniqueness_2009,
keshavan_matrix_2009,
wright_robust_2009,
gross_quantum_2009,
recht_simpler_2009,
gross_recovering_2009,
candes_robust_2009}
is:
Reconstruct a low-rank matrix $\rho$
from $m$ randomly selected matrix elements. The more general
version introduced in 
\cite{gross_recovering_2009} reads:
Reconstruct $\rho$
from $m$ randomly selected expansion
coefficients with respect to any fixed matrix basis.

Let us consider what seems to be the most mundane aspect of the
problem: the way in which the $m$ coefficients are ``randomly
selected''. Assume we are dealing with an $n\times n$ matrix $\rho$.
The statement of the matrix recovery problem calls for us to sample
$m$ of the $n^2$ coefficients characterizing $\rho$ \emph{without
replacing}. This yields a random subset $\Omega$ consisting of $m$ of
the $n^2$ coefficients, from which the matrix $\rho$ is then to be
recovered.

Due to the requirement that the drawn coefficients be distinct, the
$m$ samples are not independent. Their dependency turns out to impede
the technical analysis of the recovery algorithms. In order to avoid this
complication, most authors chose to first
analyze a variant where the revealed coefficients are drawn
independently and then, in a second step, relate the modified question
to the original one. Two such proxies for sampling without replacement
have been discussed:

\emph{1. The Bernoulli model}
\cite{candes_exact_2009,candes_power_2009,
candes_robust_2009}. Here, each of the $n^2$ coefficients is assumed
to be known with probability $\frac{m}{n^2}$. Thus the number of
revealed coefficients is itself a random variable (with expectation
value $m$). The minor draw-back of this approach is that, with finite
probability, significantly more than $m$ coefficients will be
uncovered. These possible violations of the rules of the original
problem have to be factored in, when the success probability of the
algorithm is computed.

\emph{2. The i.i.d.\ approach}
\cite{gross_quantum_2009,gross_recovering_2009,recht_simpler_2009}.
The known coefficients are obtained by sampling $m$ times \emph{with}
replacement. The draw-back here is that, with fairly high probability,
some coefficients will be selected more than once. To understand why
this is undesirable, we need to recall some technical definitions from
\cite{gross_recovering_2009}. 

Let $A_1,\dots, A_m$ be random variables taking values in $[1,n^2]$.
For now, assume the $A_i$'s are distributed uniformly and
independently. Let $\{w_a\}_{a=1}^{n^2}$ be an orthonormal Hermitian
basis in the space of $n\times n$-matrices. 
A central object in the analysis is the \emph{sampling operator},
defined as
\begin{equation}\label{eqn:r}
	\R: \rho \mapsto 
	\frac{n^2}{m} \sum_{i=1}^m \tr(\rho w_{A_i}) \,w_{A_i}.
\end{equation}

If the $A_i$ are all distinct, then $\frac{m}{n^2}\R$ is a
projection operator. If, on the other hand, some basis elements occur
more than once, the spectrum of the sampling operator will be more
complicated. More importantly, the operator norm $\|\frac{m}{n^2}\R\|$
may become fairly large. The latter effect is undesirable, as the
logarithm of the operator norm appears as a multiplicative constant in
the final bound on the number of coefficients which need to be known
in order for the reconstruction process to be successful.

There seem to be three ways to cope with this problem. First, use the
worst-case estimate $\|\frac{m}{n^2}\R\|\leq m$ (done in Section~II.C
of \cite{gross_recovering_2009}). Second, use the fact that the
operator norm is very likely to be of order $O(\log n)$ (suggested at
the end of Section~II.C in \cite{gross_recovering_2009} and
implemented in later versions of 
\cite{recht_simpler_2009}). Third, prove that the arguments in
\cite{gross_recovering_2009} remain valid when the $A_i$'s are chosen
without replacement. Supplying such a proof is the purpose of the
present note.

Following earlier work \cite{candes_exact_2009,candes_power_2009},
Ref.~\cite{gross_recovering_2009} reduces the
analysis of the matrix recovery problem to the problem of controlling
the operator norm of various linear functions of $\R$ (c.f.\ Lemma~4
and Lemma~6 of \cite{gross_recovering_2009}). This, in turn, is done
by employing a large-deviation bound for the sum of independent
matrix-valued random variables, which was derived in
\cite{ahlswede_strong_2002}.  Below, we point out that in some
situations this bound remains valid when the random variables are not
independent, but represent sampling without replacing.

\subsection{Statement}

Let $C$ be a finite set. For $1\leq m\leq |C|$, let $X_i$ be a random
variable taking values in $C$ with uniform probability. We assume that
all the $X_i$ are independent, so that ${\bf X}=\langle X_1, \dots,
X_m\rangle$ is a $C^m$-valued random vector modeling sampling
\emph{with} replacement from $C$. Likewise, let ${\bf Y}=\langle Y_1,
\dots, Y_m\rangle$ be a random vector of $C$'s sampled uniformly
\emph{without} replacement.

We are mainly interested in the case where $C$ is a finite set of
Hermitian matrices with some additional properties: We assume the
set is centered $\EE[X_i]=0$ and that there are constants
$c,\sigma_0\in\RR$ bounding the operator norm $\|X_i\|\leq c$ and the
variance $\|\EE[X_i^2]\|\leq\sigma_0^2$ of the random variables. Then:

\begin{theorem}[Operator-Bernstein inequality]\label{thm:bernstein}
	With the definitions above, let
	$S_{\bf X}=\sum_{i=1}^m X_i$ and
	$S_{\bf Y}=\sum_{i=1}^m Y_i$. 
	Let $V=m \sigma_0^2$. Then for both $S=S_{\bf X}$ and $S=S_{\bf Y}$
	it holds that
	\begin{equation}\label{eqn:bernstein}
		\Pr\big[\|S\|>t\big]
		\leq 2 n \exp\left(-\frac{t^2}{4V}\right),
	\end{equation}
	for $t \leq 2 V/c$, and
	\begin{equation}\label{eqn:bernstein2}
		\Pr\big[\|S\|>t\big]
		\leq 2 n \exp\left(-\frac{t}{2c}\right),
	\end{equation}
	for larger values of $t$. 
\end{theorem}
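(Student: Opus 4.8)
The plan is to establish the bound for $S_{\bf X}$ first by the standard Ahlswede–Winter method, and then transfer it to $S_{\bf Y}$ via Hoeffding's reduction from sampling without replacement to sampling with replacement. For the with-replacement case, the strategy is to control the expectation of the trace of the matrix exponential: by the exponential Markov inequality applied to the largest eigenvalue, $\Pr[\lambda_{\max}(S_{\bf X}) > t] \leq e^{-\lambda t}\,\EE[\tr e^{\lambda S_{\bf X}}]$ for any $\lambda > 0$. The Golden–Thompson inequality $\tr e^{A+B} \le \tr(e^A e^B)$ together with independence of the $X_i$ lets one peel off the factors one at a time, giving $\EE[\tr e^{\lambda S_{\bf X}}] \le n \prod_{i=1}^m \|\EE[e^{\lambda X_i}]\|$. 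The single-variable factor is then bounded using $\EE[X_i]=0$, $\|X_i\|\le c$, and $\|\EE[X_i^2]\|\le\sigma_0^2$: expanding $e^{\lambda X_i} = \Id + \lambda X_i + (e^{\lambda X_i} - \Id - \lambda X_i)$ and using the operator inequality $e^{x}-1-x \le \tfrac{x^2}{2}g(x)$ on the spectrum of $\lambda X_i$ (or the cruder bound that $e^{\lambda x}-1-\lambda x \le \lambda^2 x^2$ for $\lambda c \le $ const) yields $\|\EE[e^{\lambda X_i}]\| \le \exp(\lambda^2 \sigma_0^2)$ in the relevant range of $\lambda$. Combining, $\Pr[\lambda_{\max}(S_{\bf X})>t] \le n\exp(\lambda^2 V - \lambda t)$; optimizing over $\lambda$ gives $\lambda = t/(2V)$, hence the Gaussian bound \eqref{eqn:bernstein}, valid as long as the optimal $\lambda$ stays in the range where the single-variable estimate holds, i.e.\ $\lambda c = tc/(2V) \le$ const, which is exactly the condition $t \le 2V/c$. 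For larger $t$ one instead fixes $\lambda$ at the boundary of the admissible range, producing the linear bound \eqref{eqn:bernstein2}. Applying the same to $-S_{\bf X}$ and adding a factor of $2$ handles $\|S_{\bf X}\| = \max(\lambda_{\max}(S_{\bf X}), \lambda_{\max}(-S_{\bf X}))$.

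To pass to $S_{\bf Y}$, I would invoke Hoeffding's observation that for any continuous convex function $\phi$ on the reals, $\EE[\phi(S_{\bf Y})] \le \EE[\phi(S_{\bf X})]$ when $S_{\bf X}$ is the with-replacement sum and $S_{\bf Y}$ the without-replacement sum of the same population — the without-replacement sample is a mixture of permutations, and any such sum can be written as an average over with-replacement configurations in a way that makes the convex-function comparison go through. The key point is that this comparison must be applied at the level of the scalar quantity $\tr e^{\lambda S}$ rather than to $S$ itself: the map $S \mapsto \tr e^{\lambda S}$ is a convex function of the matrix $S$ (it is a composition of the convex spectral function $\tr \exp$ with a linear map), so Hoeffding's inequality yields $\EE[\tr e^{\lambda S_{\bf Y}}] \le \EE[\tr e^{\lambda S_{\bf X}}]$. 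Every bound derived for the with-replacement case therefore carries over verbatim.

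The main obstacle I anticipate is not any single hard estimate but rather making the Hoeffding reduction precise in the matrix setting — specifically, verifying that $S \mapsto \tr e^{\lambda S}$ is genuinely convex on Hermitian matrices (this follows from convexity of $t \mapsto e^{t}$ via the standard argument that $\tr f(\spacedot)$ is convex for convex scalar $f$, e.g.\ by Klein's inequality or Peierls–Bogoliubov) and that Hoeffding's averaging identity applies to a convex functional of a vector-valued, here matrix-valued, sample mean. Once that structural point is in place, the computation is the classical Bernstein argument and should present no real difficulty; the bookkeeping of which range of $t$ corresponds to the Gaussian versus the linear regime is routine.
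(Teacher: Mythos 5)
Your proposal is correct and follows essentially the same route as the paper: the with-replacement bound is the standard Ahlswede--Winter/Bernstein argument (which the paper simply cites), and the transfer to sampling without replacement is exactly Hoeffding's convex-ordering argument applied to the convex spectral function $S \mapsto \tr e^{\lambda S}$. The only piece you leave implicit --- and which the paper spells out as its Lemma~\ref{lem:emulate} --- is the explicit coupling: a randomized map ${\bf Z}$ such that ${\bf Z(Y)}$ is distributed as ${\bf X}$ while $\EE_{\bf Z}[\sum_i Z_i({\bf Y})] = \sum_i Y_i$, after which Jensen's inequality gives $M_{\bf Y}(\lambda) \le M_{\bf X}(\lambda)$ exactly as you describe.
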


The version involving  $S_{\bf X}$ has been proved in
\cite{gross_recovering_2009} as a minor variation of the
operator-Chernoff bound from \cite{ahlswede_strong_2002}. In the
proof, the failure probability is bounded from above in terms of the
``operator moment-generating function''
\begin{equation*}
	M_{\bf X}(\lambda)=\EE[\tr\exp(\lambda S_{\bf X})].
\end{equation*}
To establish the more general statement, it would be sufficient to
show that $M_{\bf Y}\leq M_{\bf X}$. In fact, this relation is
well-known to hold for real-valued random variables. One popular way
of proving it involves the notion of \emph{negative association}
\cite{joag-dev_negative_1983,dubhashi_concentration_2009}. Indeed, the
author of \cite{gross_recovering_2009} tried to generalize this
concept to the case of matrix-valued random variables, but failed to
overcome its apparent dependency on the \emph{total} order
of the
real numbers. However, he overlooked a much older and more
elementary argument given in \cite{hoeffding_probability_1963}, which
only relies on certain convexity properties and applies without change
to the matrix-valued case (see below).

\subsection{Implications}

As a consequence of Theorem~\ref{thm:bernstein}, the analysis in
Section~II.C of \cite{gross_recovering_2009} can be simplified and
improved, by setting the constant $C$ equal to one. The remark at the
end of that section applies. In particular, in the rest of that paper,
one may assume that $\|\Delta_T\|_2 < n^{1/2}\|\Delta_T^\bot\|_2$.
Thus, the conditions on the certificate $Y$ in Section~II.E may be
relaxed to $\|\mathcal{P}_T Y - \operatorname{sgn} \rho\|_2 \leq
\frac{1}{2 n^{1/2}}$. This implies that $l$, the number of iterations
of the ``golfing scheme'', may be reduced to $l=\lceil \log_2(2 n^{1/2}
\sqrt r) \rceil $. The estimates on $|\Omega|$ in Theorems~1, 2, and 3
therefore all improve by a factor of $\frac{\log_2 n^2}{\log_2
n^{1/2}} = 4$.

In \cite{recht_simpler_2009}, Proposition~3.3 becomes superfluous. The
final bounds improve accordingly.

The consequences are more pronounced for an upcoming detailed analysis
\cite{becker__2009} of noise resilience (in the spirit of
\cite{candes_matrix_2009}) of quantum mechanical applications.

The present note makes no statements about approaches which either
rely on the Bernoulli model, or use the non-commutative Kintchine
inequality instead of the operator Chernoff bound
\cite{candes_exact_2009,candes_power_2009,
candes_robust_2009}.  

Finally, note that the ``golfing scheme'' employed in
\cite{gross_quantum_2009,gross_recovering_2009} demands that $l$
independent batches of coefficients be sampled. As a consequence of
Theorem~\ref{thm:bernstein}, every single batch may be assumed to be
drawn without replacement. However, for technical reasons, it is still
necessary that the batches remain independent. This does not
constitute a problem.  Indeed, let $\Omega$ be the set of distinct
coefficients used by the golfing scheme. It is shown that, with high
probability, there exists a ``dual certificate'' in the space spanned
by the basis elements corresponding to the coefficients in $\Omega$.
Since $\Omega$ is just a random subset of cardinality $|\Omega|\leq
m$, the probability that there is a dual certificate in the space
spanned by $m$ distinct random basis elements (obtained from sampling
without replacing) can only be higher. A very similar argument has 
recently been given in \cite{candes_robust_2009}, where the golfing
scheme has been modified to work with the Bernoulli model.

\subsection{Proof}

In this section, we repeat an argument from
\cite{hoeffding_probability_1963} which implies that for all
$\lambda\in\RR$ the inequality  $M_{\bf Y}(\lambda)\leq M_{\bf
X}(\lambda)$ holds. We emphasize that the proof of
\cite{hoeffding_probability_1963} does not need to be modified in
order to apply matrix-valued random variables. However, the
version given below makes some steps explicit which were omitted in
the original paper.

For now, let $C$ be any finite set; let ${\bf X}, {\bf Y}$ be as
above.

The central observation is that one can generate the distribution of
${\bf X}$ by first sampling  ${\bf y}=\langle y_1, \dots, y_m\rangle$
without replacement, and then drawing the $\langle x_1, \dots, x_m
\rangle$ from $\{y_1, \dots, y_m\}$ in a certain (unfortunately not
completely trivial) way.

To make that second step precise, we introduce a random partial
function ${\bf Z}$ from $C^m$ to $C^m$. The domain of $f$ is the set
of vectors ${\bf y}\in C^m$ with pairwise different components ($y_i
\neq y_j$).  Given such a vector ${\bf y}$, we sequentially assign
values to the components $Z_1, \dots, Z_m$ of ${\bf Z(y)}$ by sampling
from $\{y_1, \dots, y_m\}$ according to the following recipe.  At the
$k$th step, let $D_k$ be the subset of $\{y_1, \dots, y_m\}$ of values
which have already been drawn in a previous step. To get $Z_k$: 
\begin{enumerate}
	\item with probability $\frac{|D_k|}{|C|}$ take a random element from
	$D_k$, and

	\item
	with probability $1-\frac{|D_k|}{|C|}$ take a random element from
	the $\{y_1, \dots, y_m\}$ not contained in $D_k$.
\end{enumerate} 
(Here, by a ``random'' element, we mean one sampled uniformly at
random from the indicated set). Then

\begin{lemma}\label{lem:emulate}
	With the definitions above, ${\bf X}$ and ${\bf Z(Y)}$  are
	identically distributed.

	What is more, if $C$ is a subset of a vector space, then
	\begin{equation}\label{eqn:symmetric}
		\EE_{\bf Z}
		\Big[\sum_{i=1}^m Z_i({\bf Y})\Big]
		=\sum_{i=1}^m Y_i.
	\end{equation}
\end{lemma}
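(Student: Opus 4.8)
The plan is to verify the two assertions separately: first that ${\bf Z(Y)}$ has the same law as ${\bf X}$, and then the averaging identity~\eqref{eqn:symmetric}. For the first part I would argue by induction on $k$, showing that conditionally on $Y_1,\dots,Y_m$ (a uniformly random $m$-subset, presented in random order) the partial vector $\langle Z_1,\dots,Z_k\rangle$ is distributed exactly as $\langle X_1,\dots,X_k\rangle$ conditioned on $\{X_1,\dots,X_k\}\subseteq\{Y_1,\dots,Y_m\}$. The key computation is the one-step transition: given the history, $Z_{k+1}$ should equal any fixed $c\in C$ with probability $1/|C|$. If $c$ has already appeared, i.e.\ $c\in D_{k+1}$, then $c$ can only be produced via branch~1, with probability $\frac{|D_{k+1}|}{|C|}\cdot\frac{1}{|D_{k+1}|}=\frac{1}{|C|}$; if $c\in\{Y_1,\dots,Y_m\}\setminus D_{k+1}$, it can only be produced via branch~2, with probability $\bigl(1-\frac{|D_{k+1}|}{|C|}\bigr)\cdot\frac{1}{m-|D_{k+1}|}$. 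The last factor is \emph{not} in general $\frac{1}{|C|-|D_{k+1}|}$, which is exactly why the recipe is ``not completely trivial'': one must use that, conditioned on the \emph{set} $\{Y_1,\dots,Y_m\}$, the ${\bf Y}$ vector is uniformly ordered, so a uniform draw from $\{Y_1,\dots,Y_m\}\setminus D_{k+1}$ is distributed as a uniform draw from the full complement $C\setminus D_{k+1}$ would be after conditioning on landing in $\{Y_1,\dots,Y_m\}$. Averaging over the random set $\{Y_1,\dots,Y_m\}$ — which is itself a uniformly random $m$-subset — then collapses the conditioning and yields that $\langle Z_1,\dots,Z_{k+1}\rangle$ is uniform on $C^{k+1}$, completing the induction. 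Setting $k=m$ gives the first claim.

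For the identity~\eqref{eqn:symmetric} I would fix the realized vector ${\bf y}=\langle y_1,\dots,y_m\rangle$ with distinct components and compute $\EE_{\bf Z}[Z_k]$ for each $k$ by induction on $k$, the claim being that $\EE_{\bf Z}[Z_k]$ equals the average $\frac{1}{m}\sum_{j=1}^m y_j$ for every $k$ — so that summing over $k$ produces $\sum_j y_j$, and then taking expectation over ${\bf Y}$ is trivial since the statement already holds conditionally. At step $k$, conditioned on the previously drawn \emph{multiset}, let $D$ be the set of distinct values already seen, with $d=|D|$. Branch~1 contributes $\frac{d}{|C|}\cdot\frac{1}{d}\sum_{y\in D}y=\frac{1}{|C|}\sum_{y\in D}y$ and branch~2 contributes $\bigl(1-\frac{d}{|C|}\bigr)\cdot\frac{1}{m-d}\sum_{y\notin D}y=\frac{|C|-d}{|C|}\cdot\frac{1}{m-d}\sum_{y\notin D}y$. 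Here I would exploit a symmetry: conditioned only on $Y_{k}$'s history up to the level of ``which values have appeared,'' the set $D$ is a uniformly random $d$-subset of $\{y_1,\dots,y_m\}$ (this is the inductive bookkeeping one has to carry along), so $\EE[\sum_{y\in D}y]=\frac{d}{m}\sum_j y_j$ and $\EE[\sum_{y\notin D}y]=\frac{m-d}{m}\sum_j y_j$; plugging in, the two branch contributions sum to $\bigl(\frac{d}{|C|}+\frac{|C|-d}{|C|}\bigr)\cdot\frac{1}{m}\sum_j y_j=\frac{1}{m}\sum_j y_j$, independent of $d$. That is the desired per-step value.

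The main obstacle is the bookkeeping in the first part: keeping careful track of the right conditioning (on the \emph{set} $\{Y_1,\dots,Y_m\}$ versus on the ordered vector versus on the multiset of already-drawn values) so that the factor $\frac{1}{m-|D_{k+1}|}$ is correctly matched against $\frac{1}{|C|-|D_{k+1}|}$ after averaging. I expect the cleanest route is to prove the stronger joint statement that the pair $\bigl(\langle Z_1,\dots,Z_k\rangle,\ \{Y_1,\dots,Y_m\}\bigr)$ has the same law as $\bigl(\langle X_1,\dots,X_k\rangle,\ U\bigr)$ where $U$ is a uniformly random $m$-subset of $C$ containing $\{X_1,\dots,X_k\}$ and otherwise independent — this makes both the one-step transition check and the final marginalization routine. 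Everything else is linearity of expectation and the elementary subset-symmetry observations, neither of which uses any order structure on $C$; in particular the argument goes through verbatim when $C$ is a set of Hermitian matrices, which is all that is needed to deduce $M_{\bf Y}(\lambda)\le M_{\bf X}(\lambda)$ from Jensen's inequality applied to the convex map $S\mapsto\tr\exp(\lambda S)$.
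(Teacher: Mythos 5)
Your proposal is sound in substance and, for the first claim, follows the same route as the paper: show that each one-step conditional probability $\Pr\big[Z_k({\bf Y})=x_k \mid Z_1({\bf Y})=x_1,\dots,Z_{k-1}({\bf Y})=x_{k-1}\big]$ equals $1/|C|$ and iterate. You are right to dwell on the point the paper's own computation elides: branch~2 produces a given \emph{new} value $x_k$ with probability $\bigl(1-\tfrac{|D_k|}{|C|}\bigr)\tfrac{1}{m-|D_k|}$ \emph{times} the probability $\tfrac{m-|D_k|}{|C|-|D_k|}$ that $x_k$ lies in the as-yet-unseen part of $\{Y_1,\dots,Y_m\}$, which, given the $Z$-history, is a uniform $(m-|D_k|)$-subset of $C\setminus D_k$; only after this averaging does the paper's factor $\tfrac{1}{|C|-|D_k|}$ appear. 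One caution, though: the inductive hypothesis you state first --- that conditionally on the set $\{Y_1,\dots,Y_m\}=S$ the vector $\langle Z_1,\dots,Z_k\rangle$ is distributed as $\langle X_1,\dots,X_k\rangle$ conditioned on landing in $S$, i.e.\ uniformly on $S^k$ --- is false for $k\geq 2$ unless $m=|C|$: for instance $\Pr[Z_2=Z_1\mid S]=1/|C|$, whereas uniformity on $S^2$ would give $1/m$. Only the joint-law formulation you arrive at in your last paragraph (the pair $\bigl(\langle Z_1,\dots,Z_k\rangle,\{Y_1,\dots,Y_m\}\bigr)$ distributed as $\bigl(\langle X_1,\dots,X_k\rangle,U\bigr)$ with $U$ uniform over the $m$-subsets containing $\{X_1,\dots,X_k\}$) is a valid invariant; that is the version you must carry through the induction, and it does close correctly because the law of the $Z$-history given $\{Y_1,\dots,Y_m\}=S$ depends on $S$ only through the constraint $D_k\subseteq S$.

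For the identity (\ref{eqn:symmetric}) your argument genuinely differs from the paper's. You fix ${\bf y}$ and compute $\EE_{\bf Z}[Z_k]=\frac1m\sum_j y_j$ separately for each $k$, using that the set of already-seen values is, conditioned on its cardinality, a uniform subset of $\{y_1,\dots,y_m\}$ (true, since the branch probabilities depend only on $|D_k|$ and each new value is drawn uniformly from the complement). The paper instead notes that the left-hand side is a linear, permutation-symmetric function of the $Y_i$, hence of the form $K\sum_i Y_i$, and fixes $K=1$ by taking the expectation over ${\bf Y}$. Your route is more explicit and yields the stronger pointwise statement per step; the paper's symmetry argument avoids the subset-uniformity bookkeeping entirely. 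Both are correct, and neither uses any order structure on $C$, which is all that is needed for the matrix-valued application.
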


\begin{proof}
	Choose $k\in\{1,\dots,m\}$, let ${\bf x}\in C^m$. We compute the
	conditional probability 
	\begin{equation*}
		\operatorname{Pr}\big[
			Z_k({\bf Y})=x_k \,|\,
			Z_1({\bf Y})=x_1, \dots, Z_{k-1}({\bf Y})=x_{k-1}
		 \big].
	\end{equation*}
	If there is a $j<k$ such that $x_k = x_j$, then, according to
	the first rule above, the probability is 
	\begin{equation*}
		\frac{|D_k|}{|C|} \frac1{|D_k|} = \frac{1}{|C|}.
	\end{equation*}
	Otherwise, by the second rule, the probability reads
	\begin{equation*}
		\left(1-\frac{|D_k|}{|C|}\right) \frac1{|C|-|D_k|} = \frac1{|C|}
	\end{equation*}
	as well. Iterating:
	\begin{eqnarray*}
		&&
		\Pr[Z_1({\bf Y})=x_1, \dots, Z_m({\bf Y})=x_m] \\
		&=&
		\Pr[Z_1({\bf Y})=x_1, \dots, Z_{m-1}({\bf Y})=x_{m-1}] 
		\frac1{|C|} \\
		&=&
		\Pr[Z_1({\bf Y})=x_1, \dots, Z_{m-2}({\bf Y})=x_{m-2}]
		\frac1{|C|^2} \\
		&=& \dots = \frac1{|C|^m}.
	\end{eqnarray*}
	This proves the first claim.
	
	We turn to the second statement. The left hand side of
	(\ref{eqn:symmetric}) is manifestly a linear combination of the
	random variables $Y_i$. From the definition of ${\bf Z}$, it is also
	invariant under any permutation $Y_i \mapsto Y_{\pi(i)}$. As a
	linear and symmetric function, it is of the form
	$K \sum_i^m Y_i$ for some constant $K$. To compute $K$, we use the
	fact that the $Y_i$ are identically distributed, so that
	\begin{eqnarray*}
		\EE_{\bf Y} \Big[\EE_{\bf Z}
		\Big[\sum_{i=1}^m Z_i({\bf Y})\Big]\Big]
		&=& m\,\EE[Y_1], \\
		\EE_{\bf Y}\Big[K \sum_{i=1}^m Y_i\Big]&=&K m\,\EE[Y_1].
	\end{eqnarray*}
	Thus $K=1$ and we are done.
\end{proof}

Now let $f$ be a convex function on the convex hull of $C$. Using
Jensen's inequality and
Lemma~\ref{lem:emulate},
\begin{eqnarray*}
	\EE_{\bf X} \Big[ f\big(\sum_{i=1}^m X_i\big) \Big]
	&=&
	\EE_{\bf Y} \EE_{\bf Z} 
	\Big[ f\big(\sum_{i=1}^m Z_i({\bf Y})\big) \Big] \\
	&\geq&
	\EE_{\bf Y}  
	\Big[ f\big(
		\EE_{\bf Z}\big[
		\sum_{i=1}^m Z_i({\bf Y})
		\big]
	\big) \Big]  \\
	&=&
	\EE_{\bf Y}  
	\Big[ f\big(
		\sum_{i=1}^m  Y_i
	\big) \Big].
\end{eqnarray*}

Finally, specialize to the case where $C$ is a finite set of Hermitian
matrices. Since the function $c \mapsto \tr\exp(\lambda c)$ is convex
on the set of Hermitian matrices for all $\lambda\in\RR$, any upper
bound on moment generating functions derived for matrix-valued
sampling with replacing is also valid for sampling without replacing. 

\bibliographystyle{IEEEtran}
\bibliography{compressed}

\begin{thebibliography}{10}
\providecommand{\url}[1]{#1}
\csname url@samestyle\endcsname
\providecommand{\newblock}{\relax}
\providecommand{\bibinfo}[2]{#2}
\providecommand{\BIBentrySTDinterwordspacing}{\spaceskip=0pt\relax}
\providecommand{\BIBentryALTinterwordstretchfactor}{4}
\providecommand{\BIBentryALTinterwordspacing}{\spaceskip=\fontdimen2\font plus
\BIBentryALTinterwordstretchfactor\fontdimen3\font minus
  \fontdimen4\font\relax}
\providecommand{\BIBforeignlanguage}[2]{{%
\expandafter\ifx\csname l@#1\endcsname\relax
\typeout{** WARNING: IEEEtran.bst: No hyphenation pattern has been}%
\typeout{** loaded for the language `#1'. Using the pattern for}%
\typeout{** the default language instead.}%
\else
\language=\csname l@#1\endcsname
\fi
#2}}
\providecommand{\BIBdecl}{\relax}
\BIBdecl

\bibitem{gross_recovering_2009}
\BIBentryALTinterwordspacing
D.~Gross, ``Recovering low-rank matrices from few coefficients in any basis,''
  \emph{preprint}, Oct. 2009. [Online]. Available:
  \url{http://arxiv.org/abs/0910.1879}
\BIBentrySTDinterwordspacing

\bibitem{recht_guaranteed_2007}
\BIBentryALTinterwordspacing
B.~Recht, M.~Fazel, and P.~A. Parrilo, ``Guaranteed {Minimum-Rank} solutions of
  linear matrix equations via nuclear norm minimization,'' \emph{preprint},
  Jun. 2007. [Online]. Available: \url{http://arxiv.org/abs/0706.4138}
\BIBentrySTDinterwordspacing

\bibitem{candes_exact_2009}
E.~Candes and B.~Recht, ``Exact matrix completion via convex optimization,''
  \emph{Foundations of Computational Mathematics}, vol.~9, no.~6, pp. 717--772,
  Dec. 2009.

\bibitem{candes_power_2009}
\BIBentryALTinterwordspacing
E.~J. Candes and T.~Tao, ``The power of convex relaxation: {Near-Optimal}
  matrix completion,'' \emph{preprint}, Mar. 2009. [Online]. Available:
  \url{http://arxiv.org/abs/0903.1476}
\BIBentrySTDinterwordspacing

\bibitem{singer_uniqueness_2009}
\BIBentryALTinterwordspacing
A.~Singer and M.~Cucuringu, ``Uniqueness of {Low-Rank} matrix completion by
  rigidity theory,'' \emph{preprint}, Feb. 2009. [Online]. Available:
  \url{http://arxiv.org/abs/0902.3846}
\BIBentrySTDinterwordspacing

\bibitem{keshavan_matrix_2009}
\BIBentryALTinterwordspacing
R.~H. Keshavan, A.~Montanari, and S.~Oh, ``Matrix completion from a few
  entries,'' \emph{preprint}, 2009. [Online]. Available:
  \url{http://arxiv.org/abs/0901.3150}
\BIBentrySTDinterwordspacing

\bibitem{wright_robust_2009}
\BIBentryALTinterwordspacing
J.~Wright, A.~Ganesh, S.~Rao, and Y.~Ma, ``Robust principal component analysis:
  Exact recovery of corrupted {Low-Rank} matrices,'' \emph{preprint}, May 2009.
  [Online]. Available: \url{http://arxiv.org/abs/0905.0233}
\BIBentrySTDinterwordspacing

\bibitem{gross_quantum_2009}
\BIBentryALTinterwordspacing
D.~Gross, Y.~Liu, S.~T. Flammia, S.~Becker, and J.~Eisert, ``Quantum state
  tomography via compressed sensing,'' \emph{preprint}, Sep. 2009. [Online].
  Available: \url{http://arxiv.org/abs/0909.3304}
\BIBentrySTDinterwordspacing

\bibitem{recht_simpler_2009}
\BIBentryALTinterwordspacing
B.~Recht, ``A simpler approach to matrix completion,'' \emph{preprint}, Oct.
  2009. [Online]. Available: \url{http://arxiv.org/abs/0910.0651}
\BIBentrySTDinterwordspacing

\bibitem{candes_robust_2009}
\BIBentryALTinterwordspacing
E.~Candes, X.~Li, Y.~Ma, and J.~Wright, ``Robust principal component
  analysis?'' \emph{preprint}, 2009. [Online]. Available:
  \url{http://arxiv.org/abs/0912.3599}
\BIBentrySTDinterwordspacing

\bibitem{ahlswede_strong_2002}
R.~Ahlswede and A.~Winter, ``Strong converse for identification via quantum
  channels,'' \emph{{IEEE} Transactions on Information Theory}, vol.~48, no.~3,
  pp. 569--579, 2002.

\bibitem{joag-dev_negative_1983}
K.~{Joag-Dev} and F.~Proschan, ``Negative association of random variables with
  applications,'' \emph{The Annals of Statistics}, vol.~11, no.~1, pp.
  286--295, 1983.

\bibitem{dubhashi_concentration_2009}
D.~P. Dubhashi and A.~Panconesi, \emph{Concentration of Measure for the
  Analysis of Randomized Algorithms}.\hskip 1em plus 0.5em minus 0.4em\relax
  Cambridge University Press, Jun. 2009.

\bibitem{hoeffding_probability_1963}
W.~Hoeffding, ``Probability inequalities for sums of bounded random
  variables,'' \emph{Journal of the American Statistical Association}, vol.~58,
  no. 301, pp. 13--30, Mar. 1963.

\bibitem{becker__2009}
S.~Becker, S.~T. Flammia, D.~Gross, Y.~Liu, and J.~Eisert, 2009, in
  preparation.

\bibitem{candes_matrix_2009}
E.~J. Candes and Y.~Plan, ``Matrix completion with noise,'' \emph{Proceedings
  of the {IEEE}}, 2009.

\end{thebibliography}

\end{document}